\documentclass[11pt]{article}

\pagestyle{headings}

 \usepackage{varioref}
 \usepackage{wrapfig}%
\usepackage{setspace}
\usepackage[]{graphicx} 
\usepackage{amsfonts}
\usepackage{amsthm}
\usepackage{amsmath}    
\usepackage{amssymb}    
\usepackage{subfigure}
\usepackage{hyperref}
\usepackage{url}
\usepackage{psfrag}

\setlength{\parskip}{3ex plus 2ex minus 2ex}
\title{\bf Projection Operator in Adaptive Systems}
\author{Eugene Lavretsky\thanks{The Boeing Company, Huntington Beach CA, USA} \
Travis E. Gibson\thanks{Department of Mechanical Engineering, Massachusetts Institute of Technology,Cambridge MA 02139 USA}
\ and Anuradha M. Annaswamy\thanks{Department of Mechanical Engineering, Massachusetts Institute of Technology,Cambridge MA 02139 USA}}

%
%
%
%



\theoremstyle{plain}
\newtheorem{thm}{Theorem}
\newtheorem{lem}[thm]{Lemma}

\theoremstyle{definition}
\newtheorem{defn}[thm]{Definition}
\newtheorem{exmp}[thm]{Example}
\theoremstyle{remark}
\newtheorem*{rem}{Remark}

\newcommand{\proj}[2]{\text{Proj}({#1},{#2})}
\newcommand{\Proj}[3]{\text{Proj}({#1},{#2},{#3})}
\providecommand{\norm}[1]{\lVert#1\rVert}
\newcommand{\re}[0]{\mathbb R}

\setlength{\textwidth}{6in}
\setlength{\oddsidemargin}{0.25 in}
\setlength{\evensidemargin}{0.25 in}


\setlength{\voffset}{-0.75in}
\setlength{\textheight}{9in}

\begin{document}
\maketitle
\begin{abstract}
The projection algorithm is frequently used in adaptive control and this note presents a detailed analysis of its properties.
\end{abstract}

\section{Introduction}
These notes started in \cite{lav06_proj} as a personal communication from Eugene to colleagues in the field of adaptive control and summarized results from \cite{slo86,pom92,ioabook,kha00}. Properties of the projection operator are explored in detail in the following section.

\section{Properties of Convex Sets and Functions}
\begin{defn}
A set $E \subset\mathbb{R}^k$ is {\em convex} if
\[\lambda x +(1-\lambda)y \in E\]
whenever $x \in E$, $y \in E$, and $0\leq \lambda \leq 1$
\end{defn}
\begin{rem}
Essentially, a convex set has the following property. For any two points $x, y \in E$ where $E$ is convex, all the points on the connecting line from $x$ to $y$ are also in $E$.
\end{rem}

\begin{defn}
A function $f: \mathbb{R}^k \rightarrow \mathbb R$ is {\em convex} if \[ f(\lambda x +(1-\lambda)y) \leq \lambda f(x) + (1-\lambda)f(y)\] $\forall 0\leq\lambda \leq1$.
\end{defn}

\begin{lem}
Let $f(\theta):\mathbb R^k\rightarrow \mathbb R$ be a convex function. Then for any constant $\delta >0$ the subset $\Omega_\delta=\lbrace{\theta \in \mathbb R^k | f(\theta)\leq\delta\rbrace}$ is convex.
\end{lem}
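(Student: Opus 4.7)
The plan is to verify the defining property of convexity directly from the definition, using the convexity of $f$ as the only ingredient. Specifically, I would fix two arbitrary points $x, y \in \Omega_\delta$ and a scalar $\lambda \in [0,1]$, and then show that the convex combination $\lambda x + (1-\lambda) y$ also lies in $\Omega_\delta$, i.e. that $f(\lambda x + (1-\lambda) y) \leq \delta$.

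The key step is the chain of inequalities
\[
f(\lambda x + (1-\lambda) y) \;\leq\; \lambda f(x) + (1-\lambda) f(y) \;\leq\; \lambda \delta + (1-\lambda)\delta \;=\; \delta,
\]
where the first inequality is precisely the convexity hypothesis on $f$, and the second uses the membership $x, y \in \Omega_\delta$, which gives $f(x) \leq \delta$ and $f(y) \leq \delta$. Convex combinations of numbers bounded above by $\delta$ are themselves bounded above by $\delta$, which collapses the right-hand side to $\delta$.

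There is essentially no obstacle here; the lemma is a direct unwrapping of definitions, and the proof is complete once the displayed inequality above is established. The only small subtlety worth remarking on is the edge case in which $\Omega_\delta$ is empty---in that situation the convexity condition is vacuously satisfied, so no separate argument is required. I would therefore present the proof in a single short paragraph ending with the observation that since $\lambda$, $x$, and $y$ were arbitrary, the set $\Omega_\delta$ satisfies the definition of a convex subset of $\mathbb{R}^k$.
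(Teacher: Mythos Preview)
Your proposal is correct and follows exactly the same approach as the paper: pick two points in $\Omega_\delta$, apply the convexity inequality for $f$, and bound the resulting convex combination of $f$-values by $\delta$. The paper's proof is essentially the displayed chain of inequalities you wrote, with $\theta_1,\theta_2$ in place of $x,y$; your remark about the empty case is an extra observation not present in the paper but harmless.
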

\begin{proof}
Let $\theta_1, \theta_2 \in \Omega_\delta$. Then $f(\theta_1)\leq\delta$ and $f(\theta_2)\leq\delta$. Since $f(x)$ is convex then for any $0\leq\lambda\leq 1$
\[f\bigl(\underbrace{\lambda \theta_1+(1-\lambda)\theta_2}_\theta\bigr)\leq\lambda \underbrace{f(\theta_1)}_{\leq\delta}+(1-\lambda)\underbrace{f(\theta_2)}_{\leq\delta}\leq\lambda\delta+(1-\lambda)\delta=\delta\]
$\therefore f(\theta)\leq\delta$, thus, $\theta\in \Omega_\delta$.
\end{proof}

\begin{lem}\label{lem:proj_diff}
Let $f(\theta):\mathbb R^k \rightarrow \mathbb R$ be a continuously differentiable convex function. Choose a constant $\delta>0$ and consider $\Omega_\delta=\lbrace{\theta \in \mathbb R^k | f(\theta)\leq\delta\rbrace}\subset \mathbb R$. Let $\theta^*$ be an interior point of $\Omega_\delta$, i.e. $f(\theta^*)<\delta$. Choose $\theta_b$ as a boundary point so that $f(\theta_b)=\delta$. Then the following holds:
\begin{equation}\label{eq:grad_geo}(\theta^*-\theta_b)^T \nabla f(\theta_b)\leq0\end{equation}
where $\nabla f(\theta_b)=\left(\frac{\partial f(\theta)}{\partial \theta_1 }\; \cdots \; \frac{\partial f(\theta)}{\partial \theta_k } \right)^T $ evaluated at $\theta_b$.
\end{lem}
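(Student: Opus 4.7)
The plan is to reduce the inequality to the standard first-order characterization of differentiable convex functions, namely that
\[ f(y) \geq f(x) + \nabla f(x)^T (y - x) \qquad \forall x,y \in \mathbb R^k, \]
and then exploit the strict gap $f(\theta^*) < \delta = f(\theta_b)$ to make the right-hand quantity negative.

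Concretely, I would first derive the first-order inequality from the definition of convexity already introduced in the paper. Fix $x,y$ and set $\varphi(\lambda) = f(x + \lambda(y-x))$ for $\lambda \in [0,1]$. Convexity of $f$ gives $\varphi(\lambda) \leq (1-\lambda)\varphi(0) + \lambda\varphi(1)$, i.e.\
\[ \frac{f(x + \lambda(y-x)) - f(x)}{\lambda} \leq f(y) - f(x). \]
Letting $\lambda \to 0^+$ and using continuous differentiability, the left-hand side tends to $\nabla f(x)^T(y-x)$, which establishes the first-order bound.

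Applying this inequality with $x = \theta_b$ and $y = \theta^*$ yields
\[ f(\theta^*) \geq f(\theta_b) + \nabla f(\theta_b)^T (\theta^* - \theta_b), \]
so rearranging and invoking $f(\theta^*) < \delta$ while $f(\theta_b) = \delta$ gives
\[ \nabla f(\theta_b)^T (\theta^* - \theta_b) \leq f(\theta^*) - f(\theta_b) < 0, \]
which is stronger than the claimed $\leq 0$.

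There is no real obstacle here: the only delicate moment is justifying the passage from the chord inequality to the first-order inequality, which requires the continuous differentiability hypothesis in the statement. An alternative route I considered is an argument by contradiction using the one-dimensional restriction $g(\lambda) = f(\theta_b + \lambda(\theta^* - \theta_b))$: if $g'(0) > 0$ then convexity of $g$ forces $g(1) \geq g(0) + g'(0) > \delta$, contradicting $f(\theta^*) < \delta$. Both routes rely on the same underlying fact, and I would go with the direct first-order characterization for brevity.
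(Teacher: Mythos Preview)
Your proof is correct and is essentially the same argument as the paper's: both start from the convexity inequality $f(\theta_b+\lambda(\theta^*-\theta_b))\leq f(\theta_b)+\lambda\bigl(f(\theta^*)-f(\theta_b)\bigr)$, form the difference quotient, and pass to the limit $\lambda\to 0^+$ to obtain the directional derivative bound. The only cosmetic difference is that you first isolate this step as the general first-order characterization $f(y)\geq f(x)+\nabla f(x)^T(y-x)$ and then specialize, whereas the paper works directly with $\theta_b,\theta^*$; your observation that the resulting inequality is in fact strict is a correct bonus.
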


\begin{proof}
$f(\theta)$ is convex $\therefore$
\[
f\left(\lambda\theta^*+(1-\lambda)\theta_b\right)\leq\lambda f(\theta^*)+(1-\lambda)f(\theta_b)
\]
equivalently,
\[
f\left(\theta_b +\lambda(\theta^*-\theta_b)\right)\leq f (\theta_b)+\lambda \left(f(\theta^*)-f(\theta_b)\right)
\]
For any $0<\lambda\leq1$:
\[
\frac{f\left(\theta_b +\lambda(\theta^*-\theta_b)\right)-f(\theta_b)}{\lambda}\leq f(\theta^*)-f(\theta_b)\leq \delta-\delta=0
\]
and taking the limit as $\lambda \rightarrow 0$ yields (\ref{eq:grad_geo}).
\end{proof}


\section{Projection}

\begin{defn}The {\em Projection Operator} for two vectors $\theta,y \in \mathbb R^k$ is now introduced as
\begin{equation}\label{eq:proj_vec2}
\text{Proj}(\theta,y,f)=\begin{cases} y-\frac{\nabla f(\theta) (\nabla f(\theta))^T}{\norm{\nabla f(\theta)}^2}yf(\theta)& \text{ if } f(\theta)>0 \wedge y^T\nabla f(\theta)>0\\
y & \text{ otherwise}.\end{cases}
\end{equation}
where $f:\mathbb R^k \rightarrow \mathbb R$ is a convex function and $\nabla f(\theta)=\left(\frac{\partial f(\theta)}{\partial \theta_1 }\; \cdots \; \frac{\partial f(\theta)}{\partial \theta_k } \right)^T$. Note that the following are notationally equivalent $\proj{\theta}{y}=\Proj{\theta}{y}{f}$ when the exact structure of the convex function $f$ is of no importance.
\end{defn}
\begin{rem} A geometrical interpretation of (\ref{eq:proj_vec2}) follows. Define a convex set $\Omega_0$ as
\begin{equation}\label{eq:omega1}\Omega_0 \triangleq \bigl\{ \theta \in \mathbb R^k | f(\theta)\leq 0\bigr\}
\end{equation}
and let $\Omega_1$ represent another convex set such that
\begin{equation}\label{eq:omega2}\Omega_1 \triangleq \bigl\{ \theta \in \mathbb R^k | f(\theta)\leq 1\bigr\}\end{equation}
From (\ref{eq:omega1}) and (\ref{eq:omega2}) $\Omega_0 \subset \Omega_1$. From the definition of the projection operator in (\ref{eq:proj_vecj}) $\theta$ is not modified when $\theta\in\Omega_0$. Let
\[
\Omega_{\mathcal A} \triangleq \Omega_1 \backslash \Omega_0=\bigl\{\theta|0<f(\theta)\leq1\bigr\}
\]
represent an annulus region. Within $\Omega_\mathcal{A}$ the
projection algorithm subtracts a scaled component of $y$ that is normal to boundary $\bigl\{ \theta |f(\theta)=\lambda\}$. When $\lambda=0$, the scaled normal component is 0, and when $\lambda=1$, the component of $y$ that is normal to the boundary $\Omega_1$ is entirely subtracted from $y$, so that $\Proj{\theta}{y}{f}$ is tangent to the boundary $\bigl\{ \theta |f(\theta)=1\bigr\}$. This discussion is visualized in Figure \ref{fig:proj_pic}.
\begin{figure}[!h]
\psfrag{y1}[cc][cc]{$y$}
\psfrag{y5}[cl][cl]{$\nabla f(\theta)$}
\psfrag{y2}[cr][cr]{Proj$(\theta,y)$}
\psfrag{y3}[cc][cc]{$\theta$}
\psfrag{y4}{$\theta^*$} 
\psfrag{o1}{$\Omega_{\mathcal A}$}
\psfrag{o0000000000004}[cr][cr]{$\{\theta | f(\theta)=0\}$}
\psfrag{o0000000000003}[cr][cr]{$\{\theta| f(\theta)=1\}$}
\centering
    \includegraphics[width=3in]{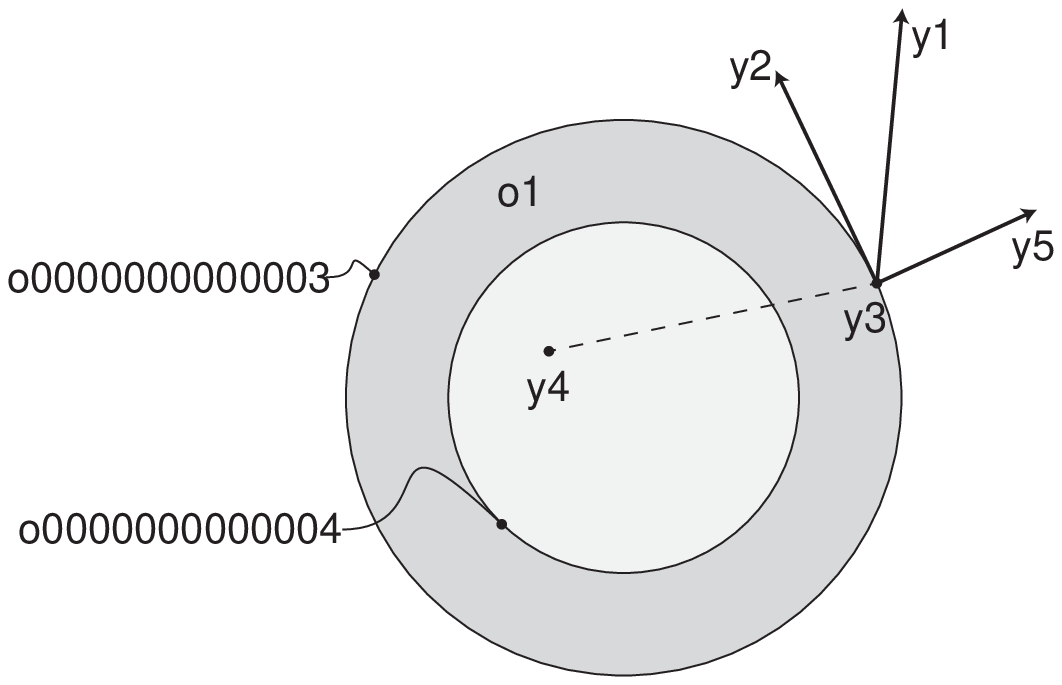}
\caption{Visualization of Projection Operator in $\mathbb R^2$.\label{fig:proj_pic}}
\end{figure}
\end{rem}

\begin{rem}Note that $(\nabla f(\theta))^T \text{Proj}(\theta,y)=0 \forall \theta$ when $f(\theta)=1$ and that the general structure of the algorithm is as follows
\begin{equation}
\text{Proj}(\theta,y)=y-\alpha(t)\nabla f(\theta)
\end{equation}
for some time varying $\alpha$ when the modification is triggered. Multiplying the left hand side of the equation by $(\nabla f(\theta))^T$ and solving for $\alpha$ one finds that 
\begin{equation}
\alpha(t)=\left((\nabla f(\theta))^T\nabla f(\theta) \right)^{-1}( \nabla f(\theta))^T y
\end{equation} and thus the algorithm takes the form 
\begin{equation}\label{rr}
\text{Proj}(\theta,y) =y - \nabla f(\theta) \left((\nabla f(\theta))^T\nabla f(\theta)\right)^{-1} (\nabla f(\theta))^T y f(\theta)
\end{equation}
where the modification is active. Notice that the $f(\theta)$ has been added to the definition, making \eqref{rr} continuous.
\end{rem}

\begin{lem}
One important property of the projection operator follows. Given $\theta^*\in \Omega_0$,
\begin{equation}\label{eq:prop1v}
(\theta-\theta^*)^T(\Proj{\theta}{y}{f}-y)\leq0.
\end{equation}
\end{lem}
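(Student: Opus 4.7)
The plan is to split into the two cases of the projection definition and reduce the inequality to an application of Lemma~\ref{lem:proj_diff}.

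First I would dispose of the trivial case. When the modification is inactive, $\text{Proj}(\theta,y,f) = y$, so $\text{Proj}(\theta,y,f)-y = 0$ and the left-hand side of \eqref{eq:prop1v} is zero, giving the inequality with equality. So the work is entirely in the active case where $f(\theta) > 0$ and $y^T \nabla f(\theta) > 0$.

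In the active case I would substitute the explicit formula to rewrite
\[
(\theta-\theta^*)^T(\text{Proj}(\theta,y,f)-y) = -\frac{f(\theta)\,\bigl((\nabla f(\theta))^T y\bigr)}{\|\nabla f(\theta)\|^2}\,(\theta-\theta^*)^T \nabla f(\theta).
\]
The scalar prefactor $f(\theta)(\nabla f(\theta))^T y / \|\nabla f(\theta)\|^2$ is strictly positive by the triggering conditions of the active case, so establishing \eqref{eq:prop1v} reduces to showing
\[
(\theta-\theta^*)^T \nabla f(\theta) \geq 0.
\]

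This is exactly where Lemma~\ref{lem:proj_diff} enters, which is the key step. I would apply it with the choice $\delta = f(\theta) > 0$, boundary point $\theta_b = \theta$ (which lies on $\{f=\delta\}$ tautologically), and interior point taken to be $\theta^*$. The interior hypothesis $f(\theta^*) < \delta$ is satisfied because $\theta^* \in \Omega_0$ gives $f(\theta^*) \leq 0 < f(\theta) = \delta$. Lemma~\ref{lem:proj_diff} then yields $(\theta^*-\theta)^T \nabla f(\theta) \leq 0$, which upon multiplying by $-1$ gives the required inequality.

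The only subtle point — the main thing to watch out for rather than a real obstacle — is to verify that the hypotheses of Lemma~\ref{lem:proj_diff} genuinely hold for this choice of $\delta$; namely that $f(\theta^*) < f(\theta)$ strictly, which follows from combining $\theta^* \in \Omega_0$ with the activation condition $f(\theta) > 0$. Once that is noted, the proof is essentially a direct computation plus a single invocation of the preceding lemma.
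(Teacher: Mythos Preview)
Your proposal is correct and follows essentially the same route as the paper: split into the trivial and active cases, substitute the projection formula in the active case, and invoke Lemma~\ref{lem:proj_diff} to control the sign of $(\theta^*-\theta)^T\nabla f(\theta)$. If anything, you are slightly more careful than the paper in explicitly identifying $\delta=f(\theta)$ and verifying the strict inequality $f(\theta^*)\leq 0<\delta$ needed for the interior-point hypothesis of Lemma~\ref{lem:proj_diff}.
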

\begin{proof}
Note that
\begin{equation*}
(\theta-\theta^*)^T(\Proj{\theta}{y}{f}-y) = (\theta^*-\theta)^T(y-\Proj{\theta}{y}{f})
\end{equation*}
If $ f(\theta)>0 \wedge y^T\nabla f(\theta)>0$, then
\begin{equation*}
 (\theta^*-\theta)^T\left(y-\left(y-\frac{\nabla f(\theta) (\nabla f(\theta))^T}{\norm{\nabla f(\theta)}^2}yf(\theta)\right)\right)
\end{equation*}
and using Lemma \ref{lem:proj_diff}
\begin{equation*}
 \frac{ \underbrace{(\theta^*-\theta)^T\nabla f(\theta)}_{\leq 0} \underbrace {(\nabla f(\theta))^Ty}_{> 0}}{\norm{\nabla f(\theta)}^2}\underbrace{f(\theta)}_{\geq 0} \leq 0
\end{equation*}
otherwise $\Proj{\theta}{y}{f}=y$.
\end{proof}

\begin{defn} [Projection Operator]
The general form of the projection operator is the $n\times m$ matrix extension to the vector definition above.
\[\text{Proj}(\Theta,Y,F) =\left[\Proj{\theta_1}{y_1}{f_1} \; \dotsc \; \Proj{\theta_m}{y_m}{y_m} \right]\]
where $\Theta=[\theta_1\; \dotsc \; \theta_m]\in \mathbb R^{n\times m}, Y=[y_1 \; \dotsc \; y_m]\in \mathbb R^{n\times m}$, and $F=[f_1(\theta_1)\; \dotsc \;f_m (\theta_m)]^T\in \mathbb R^{m\times 1}$. Recalling (\ref{eq:proj_vec2})
\begin{equation*}\label{eq:proj_vecj}
\text{Proj}(\theta_j,y_j,f_j)=\begin{cases} y_j-\frac{\nabla f_j(\theta_j) (\nabla f_j(\theta_j))^T}{\norm{\nabla f_j(\theta_j)}^2}y_jf_j(\theta_j)& \text{ if } f_j(\theta_j)>0 \wedge y_j^T\nabla f_j(\theta_j)>0\\
y_j & \text{ otherwise}\end{cases}
\end{equation*}
$j=1 \text{ to } m$.\end{defn}

\begin{lem}
Let $F=[f_1\; \dotsc \;f_m]^T\in \mathbb R^{m\times 1}$ be a convex vector function and $\hat\Theta=[\hat \theta_1 \; \dotsc \; \hat \theta_m ], \Theta=[\theta_1 \; \dotsc \;  \theta_m ], Y=[y_1 \; \dotsc \; y_m]$ where $\hat\Theta, \Theta, Y \in \mathbb R^{n \times m}$ then,
\begin{equation*}
\text{trace}\left \{ \bigl(\hat\Theta-\Theta\bigr)^T\bigl(\Proj{\hat \Theta}{Y}{F}-Y\bigr)\right\}\leq0.
\end{equation*}
\end{lem}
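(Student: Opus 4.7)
The plan is to reduce the matrix inequality to the column-wise vector inequality \eqref{eq:prop1v} already proved. The key observation is that both the trace and the projection operator split cleanly across columns.

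First I would expand the trace via the standard identity $\text{trace}(A^T B)=\sum_{j=1}^m a_j^T b_j$ for $A,B\in\mathbb{R}^{n\times m}$ with columns $a_j,b_j$. Since the matrix projection is defined column-by-column, the $j$-th column of $\Proj{\hat\Theta}{Y}{F}-Y$ is precisely $\Proj{\hat\theta_j}{y_j}{f_j}-y_j$, yielding
\begin{equation*}
\text{trace}\left\{(\hat\Theta-\Theta)^T\bigl(\Proj{\hat\Theta}{Y}{F}-Y\bigr)\right\}
=\sum_{j=1}^m (\hat\theta_j-\theta_j)^T\bigl(\Proj{\hat\theta_j}{y_j}{f_j}-y_j\bigr).
\end{equation*}

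Second, I would apply the vector-form lemma \eqref{eq:prop1v} to each summand, with the substitution $\theta\leftarrow\hat\theta_j$, $y\leftarrow y_j$, $f\leftarrow f_j$, and $\theta^{*}\leftarrow\theta_j$. This requires the hypothesis $\theta_j\in\{\theta\in\mathbb{R}^n:f_j(\theta)\leq 0\}$ for every $j$, which is implicit in the matrix statement by analogy with its vector antecedent. Under it, each summand is non-positive, so their sum is non-positive and the claim follows.

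There is no genuine obstacle here: once one recognizes that trace decomposes as a sum of column inner products and that the matrix projection operator acts columnwise, the matrix lemma is an immediate corollary of the vector one. The only delicate point is making explicit the per-column hypothesis $f_j(\theta_j)\leq 0$ that validates the invocation of \eqref{eq:prop1v}.
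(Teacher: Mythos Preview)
Your proof is correct and follows essentially the same approach as the paper: expand the trace as a sum of column inner products and apply the vector inequality \eqref{eq:prop1v} to each summand. You are in fact more careful than the paper in flagging the implicit per-column hypothesis $f_j(\theta_j)\leq 0$ needed to invoke \eqref{eq:prop1v}.
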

\begin{proof}
Using (\ref{eq:prop1v}),
\begin{equation*}\label{eq:prop1m}
\begin{split}
\text{trace}\left \{ \bigl(\hat\Theta-\Theta\bigr)^T\bigl(\Proj{\hat \Theta}{Y}{F}-Y\bigr)\right\}=&\sum_{j=1}^m(\hat\theta_j-\theta_j)^T (\Proj{\hat\theta_j}{y_j}{f_j}-y_j) \\ & \leq 0.
\qedhere\end{split}\end{equation*}
\end{proof}

The application of the projection algorithm in adaptive control is explored below.
\begin{lem}\label{lem:init}If an initial value
  problem, i.e. adaptive control algorithm with adaptive law and
  initial conditions, is defined by:
\begin{enumerate}
    \item $\dot \theta= Proj(\theta,y,f)$
    \item $\theta(t=0)=\theta_0\in\Omega_1=\{\theta\in\mathbb R^k | f(\theta)\leq 1\}$
    \item $f(\theta):\mathbb R^k \rightarrow \mathbb R$ is convex
\end{enumerate}
     Then $\theta(t)\in\Omega_1  \forall t\geq 0$.
\end{lem}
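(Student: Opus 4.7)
The plan is to show that $f(\theta(t))$ cannot increase past $1$, by bounding its time derivative along trajectories at the boundary of $\Omega_1$.

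First I would compute $\frac{d}{dt} f(\theta(t)) = \nabla f(\theta)^T \dot\theta = \nabla f(\theta)^T \text{Proj}(\theta,y,f)$ and split into the two cases from the definition \eqref{eq:proj_vec2}. In the inactive case ($f(\theta) \leq 0$ or $y^T \nabla f(\theta) \leq 0$) the projection equals $y$, so $\dot f = \nabla f(\theta)^T y$. In the active case, a direct calculation using $\nabla f(\theta)^T \nabla f(\theta)/\|\nabla f(\theta)\|^2 = 1$ gives
\[
\nabla f(\theta)^T \text{Proj}(\theta,y,f) = \nabla f(\theta)^T y \bigl(1 - f(\theta)\bigr).
\]

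Next I would check the sign of $\dot f$ on the boundary set $\{\theta : f(\theta) = 1\}$. In the active case this expression vanishes identically because of the factor $(1-f(\theta))$. In the inactive case at a boundary point, since $f(\theta)=1 > 0$, the only way the projection can be inactive is $y^T \nabla f(\theta) \leq 0$, which gives $\dot f \leq 0$. So in every case $\dot f(\theta(t)) \leq 0$ whenever $f(\theta(t)) = 1$.

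Finally I would rule out escape from $\Omega_1$ by a continuity/contradiction argument. Assume for contradiction that there exists $t_1 > 0$ with $f(\theta(t_1)) > 1$. Since $f(\theta(0)) \leq 1$ and $f \circ \theta$ is continuous, define $t^\star = \sup\{ t \in [0,t_1] : f(\theta(t)) \leq 1\}$. Then $f(\theta(t^\star)) = 1$ and $f(\theta(t)) > 1$ for all $t \in (t^\star, t_1]$, which forces a right derivative of $f \circ \theta$ at $t^\star$ that is strictly positive (or at least nonnegative with values exceeding $1$ immediately after), contradicting the bound $\dot f(\theta(t^\star)) \leq 0$ established above.

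The main obstacle is the last step: to turn the pointwise bound $\dot f \leq 0$ on the boundary into a global invariance statement, one has to be careful because $\dot f$ can be strictly positive in the interior annulus $0 < f(\theta) < 1$ (active case), so the function $f \circ \theta$ need not be monotone. The cleanest way around this is the first-exit-time / barrier argument sketched above, which only uses the sign of $\dot f$ exactly on $\{f=1\}$.
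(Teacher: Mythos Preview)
Your proposal is correct and matches the paper's approach: both compute $\dot f = (\nabla f)^T \text{Proj}(\theta,y,f)$, substitute \eqref{eq:proj_vec2} to obtain $\dot f = (\nabla f)^T y\,(1-f)$ in the active case, and read off the sign of $\dot f$. The paper's proof simply asserts the invariance conclusion after this case analysis without spelling out a first-exit-time argument; your added barrier step is more careful, and the cleanest way to close the $\dot f(t^\star)=0$ edge case you flag is to observe that the same computation actually gives $\dot f\leq 0$ on all of $\{f(\theta)\geq 1\}$, not just on the boundary $\{f=1\}$.
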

\begin{proof}
Taking the time derivative of the convex function
\begin{equation}\label{eq:dot_f}
\dot f(\theta)=(\nabla f(\theta))^T\dot \theta=(\nabla f(\theta))^T\Proj{\theta}{y}{f}
\end{equation}
Substitution of (\ref{eq:dot_f}) into (\ref{eq:proj_vec2}) leads to
\begin{equation*}\begin{split}
\dot f(\theta)&=(\nabla f(\theta))^T\Proj{\theta}{y}{f}\\
&=\begin{cases} (\nabla f(\theta))^Ty(1-f(\theta))& \text{ if } f(\theta)>0 \wedge y^T\nabla f(\theta)>0\\
(\nabla f(\theta))^Ty & \text{ if } f(\theta)\leq0 \vee y^T\nabla f(\theta)\leq0    \end{cases}
\end{split}\end{equation*}
therefore
\begin{equation*}
\begin{cases} \dot f(\theta) >0 & \text{ if } 0< f(\theta)< 1 \wedge y^T\nabla f(\theta)>0\\
\dot f(\theta)=0 & \text{ if }f(\theta)=1 \wedge  y^T\nabla f(\theta)>0  \\
 \dot f(\theta)<0 & \text{ if }f(\theta)\leq 0 \vee  y^T\nabla f(\theta)\leq 0  \end{cases}.
\end{equation*}
Thus $f(\theta_0)\leq 1 \Rightarrow f(\theta)\leq 1 \forall t\geq0$.
\end{proof}
\begin{rem}
Given $\theta_0\in \Omega_0$, $\theta$ may increase up to the boundary where $f(\theta)=1$. However, $\theta$ never leaves the convex set $\Omega_1$.
\end{rem}
\begin{exmp}[Projection Algorithm in Adaptive Control Law]\label{ex:adaptive}
Let $\Theta(t):\mathbb R^+\rightarrow \mathbb R^{m\times n}$ represent a time varying feedback gain in a dynamical system. This feedback gain is implemented as:
\begin{equation*}
u=\Theta(t)^Tx
\end{equation*}
where $u\in\re^n$ represents the control input and $x\in\re^m$ the state vector. The time varying feedback gain is adjusted using the following adaptive law
\begin{equation*}\label{eq:adaptive_ex_law}
\dot\Theta=\Proj{\Theta}{-x e^T PB}{F}
\end{equation*}
where $e\in{\re^m}$ is an error signal in the state vector space, $P\in\re^{m\times m}$ is a square matrix derived from a Lyapunov relationship and $B\in{\re^{m\times n}}$ is the input Jacobian for the LTI system to be controlled and $F(\Theta)=[f_1(\theta_1) \; \dotsc  \; f_m(\theta_m))]^T$. The projection algorithm operates with the family of convex functions
\begin{equation*}\label{eq:example_f}
f(\theta ; \vartheta,\varepsilon)=\frac {\norm{\theta}^2-{\vartheta}^2}
                 {2\varepsilon\vartheta+\varepsilon^2}.
\end{equation*}
Then, the components of the convex vector function $F$ are chosen as
\begin{equation}\label{eq:example_fi}
f_i(\theta_i)=f(\theta_i;\vartheta_i,\varepsilon_i).
\end{equation}
Each $i$--th component of $F$ is associated with two constant scalar quantities $\vartheta_i$ and $\varepsilon_i$. From (\ref{eq:example_fi}), $f_i(\theta_i)=0$ when $\norm{\theta_i}=\vartheta_i$, and $f_i(\theta_i)=1$ when $\norm{\theta_i}=\vartheta_i+\varepsilon_i$. If the initial condition for $\Theta$ is such that $\Theta(t=0)\in \Theta_0=[\theta_{0,1} \;\dotsc\; \theta_{0,m} ]$ where $\{\theta_{0,i}|f_i(\theta_i)\leq0\,i=1\text{ to } m\}$, then each $\theta_i$ satisfies all three conditions for Lemma \ref{lem:init}. Thus $\norm {\theta_i(t)} \leq \vartheta_i +\epsilon_i \forall t\geq 0$.
\end{exmp}

\section{$\Gamma$--Projection}
\begin{defn}
A variant of the projection algorithm, {$\Gamma$--projection}, updates the parameter along a symmetric positive definite gain $\Gamma$ as defined below
\begin{equation}\label{eq:proj_vec_gamma}
\text{Proj}_\Gamma(\theta,y,f)=\begin{cases} \Gamma y- \Gamma\frac{\nabla f(\theta) (\nabla f(\theta))^T}{(\nabla f(\theta))^T \Gamma \nabla f(\theta)} \Gamma yf(\theta)& \text{ if } f(\theta)>0 \wedge y^T\Gamma \nabla f(\theta)>0\\
\Gamma y & \text{ otherwise}.\end{cases}
\end{equation}
This method was first introduced in \cite{ioabook}.
\end{defn}

\begin{lem}
Given $\theta^*\in \Omega_0$,
\begin{equation}\label{eq:prop1vg}
(\theta-\theta^*)^T(\Gamma^{-1}\text{Proj}_\Gamma({\theta},{y},{f})-y)\leq 0.
\end{equation}
\end{lem}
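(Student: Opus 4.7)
The plan is to mimic the proof of the analogous non-$\Gamma$ lemma (equation (\ref{eq:prop1v})): expand $\Gamma^{-1}\text{Proj}_\Gamma(\theta,y,f)-y$ in closed form, contract with $(\theta-\theta^*)^T$, and verify each factor of the resulting scalar has a definite sign.

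First I would dispose of the inactive branch. If $f(\theta)\leq 0$ or $y^T\Gamma\nabla f(\theta)\leq 0$, then $\text{Proj}_\Gamma(\theta,y,f)=\Gamma y$, so $\Gamma^{-1}\text{Proj}_\Gamma(\theta,y,f)-y=0$ and (\ref{eq:prop1vg}) holds with equality. In the active branch, applying $\Gamma^{-1}$ to (\ref{eq:proj_vec_gamma}) cancels the leading $\Gamma$ in both terms, leaving
\begin{equation*}
\Gamma^{-1}\text{Proj}_\Gamma(\theta,y,f)-y \;=\; -\,\frac{\nabla f(\theta)\,(\nabla f(\theta))^T\Gamma y}{(\nabla f(\theta))^T\Gamma\nabla f(\theta)}\,f(\theta).
\end{equation*}
Note that in the active branch $\nabla f(\theta)\neq 0$, since otherwise the trigger condition $y^T\Gamma\nabla f(\theta)>0$ would fail, so the denominator is strictly positive by $\Gamma\succ 0$.

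Next I would left-multiply by $(\theta-\theta^*)^T$ and inspect the four resulting scalar factors. The denominator $(\nabla f(\theta))^T\Gamma\nabla f(\theta)>0$ as just noted; the factor $(\nabla f(\theta))^T\Gamma y=y^T\Gamma\nabla f(\theta)>0$ by the trigger condition; and $f(\theta)>0$ also by the trigger. The remaining factor $(\theta-\theta^*)^T\nabla f(\theta)$ is $\geq 0$ by the same convexity computation used in Lemma \ref{lem:proj_diff}: since $\theta^*\in\Omega_0$ gives $f(\theta^*)\leq 0<f(\theta)$, convexity of $f$ yields
\begin{equation*}
\frac{f\bigl(\theta+\lambda(\theta^*-\theta)\bigr)-f(\theta)}{\lambda}\;\leq\;f(\theta^*)-f(\theta)\;\leq\;0,
\end{equation*}
and sending $\lambda\to 0^+$ produces $(\theta^*-\theta)^T\nabla f(\theta)\leq 0$, i.e., $(\theta-\theta^*)^T\nabla f(\theta)\geq 0$. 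Multiplying all signs together with the explicit minus sign from the formula gives (\ref{eq:prop1vg}).

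The only subtlety, and the one place worth being explicit, is that Lemma \ref{lem:proj_diff} is literally stated for $\theta_b$ on a specific level set and $\theta^*$ strictly interior, but the same one-line limit argument goes through verbatim whenever $f(\theta^*)\leq f(\theta)$, which is exactly what holds here. After that, the proof is just sign bookkeeping and needs no further computation.
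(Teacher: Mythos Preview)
Your proof is correct and follows essentially the same route as the paper: handle the inactive branch trivially, then in the active branch expand $\Gamma^{-1}\text{Proj}_\Gamma(\theta,y,f)-y$, contract with $(\theta-\theta^*)^T$, and check the sign of each scalar factor. The only cosmetic difference is that the paper simply invokes Lemma~\ref{lem:proj_diff} for the sign of $(\theta^*-\theta)^T\nabla f(\theta)$ (which does apply, with $\delta=f(\theta)>0$), whereas you re-run its one-line limit argument explicitly; you are also more careful than the paper in noting that $\nabla f(\theta)\neq 0$ so the denominator is strictly positive.
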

\begin{proof}
If $ f(\theta)>0 \wedge y^T\Gamma \nabla f(\theta)>0$, then
\begin{equation*}
 (\theta^*-\theta)^T\left(y-\Gamma^{-1}\left(\Gamma y- \Gamma\frac{\nabla f(\theta) (\nabla f(\theta))^T}{(\nabla f(\theta))^T \Gamma \nabla f(\theta)} \Gamma yf(\theta)\right)\right)
\end{equation*}
and using Lemma \ref{lem:proj_diff}
\begin{equation*}
 \frac{ \underbrace{(\theta^*-\theta)^T\nabla f(\theta)}_{\leq 0} \underbrace {(\nabla f(\theta))^T\Gamma y}_{> 0}}{(\nabla f(\theta))^T \Gamma \nabla f(\theta)}\underbrace{f(\theta)}_{\geq 0} \leq 0
\end{equation*}
otherwise $\text{Proj}_\Gamma({\theta},{y},{f})=\Gamma y$.
\end{proof}


\bibliographystyle{plain}
\bibliography{biblio}

\newcommand{\noopsort}[1]{} \newcommand{\printfirst}[2]{#1}
  \newcommand{\singleletter}[1]{#1} \newcommand{\switchargs}[2]{#2#1}
\begin{thebibliography}{1}

\bibitem{ioabook}
Petros~A. Ioannou and Jing Sun.
\newblock {\em Robust Adaptive Control}.
\newblock Prentice-Hall, 1996.

\bibitem{lav06_proj}
E.~Lavretsky.
\newblock The projection operator.
\newblock In {\em Personal Notes}, 2006.

\bibitem{pom92}
J.~Pomet and L.~Praly.
\newblock Adaptive nonlinear regulation: Estimation from the lyapunov equation.
\newblock {\em IEEE Transactions on Automatic Control}, 37(6), June 1992.

\bibitem{kha00}
S.~Seshagiri and Hassan~K. Khalil.
\newblock Adaptive nonlinear regulation: Output feedback control of nonlinear
  systems using rbf neural networks.
\newblock {\em IEEE Transactions on Automatic Control}, 11(1), January 2000.

\bibitem{slo86}
J.-J.~E. Slotine and J.~A. Coetsee.
\newblock Adaptive sliding controller synthesis for non-linear systems.
\newblock {\em International Journal of Control}, 43(6):1631--1651, 1986.

\end{thebibliography}

\end{document}